\documentclass[conference,twocolumn,final,letterpaper]{IEEEtran}%
\usepackage{amsfonts}
\usepackage{amsmath}
\usepackage{amssymb}
\usepackage[dvips]{graphicx}
\usepackage{cite}%
\setcounter{MaxMatrixCols}{30}
\newtheorem{theorem}{Theorem}

\newtheorem{corollary}[theorem]{Corollary}

\newtheorem{definition}[theorem]{Definition}

\newtheorem{remark}[theorem]{Remark}

\begin{document}

\title{Sum-Capacity of Ergodic Fading Interference and Compound Multiaccess Channels}
\pubid{\ \ }
\specialpapernotice{\ \ }%

\author{\authorblockN{Lalitha Sankar\authorrefmark{1}%
, Elza Erkip\authorrefmark{1}\authorrefmark{2}, H. Vincent Poor\authorrefmark
{1}}
\authorblockA{\authorrefmark{1}Dept. of Electrical Engineering,
Princeton University,
Princeton, NJ 08544.
{lalitha,eerkip,poor}@princeton.edu\\}
\authorblockA{\authorrefmark{2}Dept. of Electrical and Computer Engineering,
Polytechnic University,
Brooklyn, NY 11201.
elza@poly.edu\\}}%
%

\maketitle
%

\begin{abstract}%

\footnotetext{This research is supported in part by the National\ Science
Foundation under Grants ANI-03-38807 and CNS-06-25637 and in part by a
fellowship from the Princeton University Council on\ Science and
Technology.}The problem of resource allocation is studied for two-sender
two-receiver fading Gaussian interference channels (IFCs) and compound
multiaccess channels (C-MACs). The senders in an IFC communicate with their
own receiver (unicast) while those in a C-MAC communicate with both receivers
(multicast). The instantaneous fading state between every transmit-receive
pair in this network is assumed to be known at all transmitters and receivers.
Under an average power constraint at each source, the sum-capacity of the
C-MAC and the power policy that achieves this capacity is developed. The
conditions defining the classes of \textit{strong }and \textit{very strong
ergodic }IFCs are presented and the multicast sum-capacity is shown to be
tight for both classes.%

\end{abstract}%

\section{Introduction}

The two-user interference channel (IFC) and the two-user compound multiaccess
channel (C-MAC)\ model networks with two sources (senders or transmitters) and
two destinations (receivers). The unicast case in which the message from each
source is intended for only one destination is modeled as an IFC while the
multicast case in which both messages are intended for both destinations is
modeled as a C-MAC (see Fig. \ref{Fig_IC}). The capacity region of a discrete
memoryless C-MAC is obtained in \cite{cap_theorems:Ahlswede_CMAC}. The
capacity region of both the discrete memoryless and the Gaussian IFC remain
open problems; however, for certain classes of time-invariant IFCs satisfying
specific well-defined constraints the capacity region is known (see for e.g.,
\cite{cap_theorems:Sato_IC,cap_theorems:Carleial_VSIFC,cap_theorems:CostaElGamal_IC,cap_theorems:MotaKhan}
and the references therein).

\bigskip

The ergodic sum-capacity and the capacity region of a multiaccess channel
(MAC)\ are studied in \cite{cap_theorems:Knopp_Humblet} and
\cite{cap_theorems:TH01}, respectively, under the assumption that the channel
states and statistics are known at all nodes. These papers also develop the
rate-optimal power policies. The ergodic capacity of a C-MAC, however, is not
a straightforward extension of these results. For a parallel Gaussian IFC, in
\cite{cap_theorems:YuCioffiGinnis}, the authors propose a sub-optimal
iterative water-filling solution when every receiver views signals from the
unintended transmitters as interference. In \cite{cap_theorems:ChuCioffi_IC},
the capacity of a parallel Gaussian IFC where every parallel subchannel is
strong, i.e., its signal-to-noise and interference-to-noise ratios at each
receiver satisfy specific conditions \cite{cap_theorems:Sato_IC}, is
developed. In this paper, we study the problem of resource allocation for the
two-user ergodic fading IFC and C-MAC under the assumption that the
instantaneous fading state between each transmit-receive pair in this network
is known at all transmitters and receivers. We develop the ergodic
sum-capacity for the C-MAC which in turn lower bounds the sum-capacity of the
IFC. We further develop the conditions defining the classes of \textit{strong
}and \textit{very strong ergodic} IFCs with resource allocation and show that
the C-MAC lower bounds are tight for both classes. Our work differs from
\cite{cap_theorems:ChuCioffi_IC} in that we develop capacity results for an
ergodic IFC that is strong or very strong on average, i.e., the constraints
for these classes require averaging over all channel instantiations, and thus,
our result subsumes that in \cite{cap_theorems:ChuCioffi_IC}.

\bigskip

The sum-capacity optimal policy for the C-MAC is motivated by the work in
\cite{cap_theorems:SankarLiang_Conf} on maximizing the sum-rate of an ergodic
fading two-user orthogonal multiaccess relay channel (MARC)
\cite{cap_theorems:SankarLiang_Conf} when the relay employs a
decode-and-forward (DF) strategy. For the MARC, a DF relay acts as a decoding
receiver; this enables us to generalize from
\cite{cap_theorems:SankarLiang_Conf} that when both receivers in a two-sender
two-receiver network decode messages from both sources (users), the resulting
sum-rate belongs to one of five disjoint cases or lies on the boundary of any
two of them (boundary cases). Further, the sum-rate optimal policy either: 1)
exploits the multiuser fading diversity to opportunistically schedule users
analogous to the fading MAC
\cite{cap_theorems:Knopp_Humblet,cap_theorems:TH01} or 2) involves
simultaneous water-filling over two independent point-to-point links. We first
develop the capacity region of the ergodic C-MAC; the resulting region is
shown to lie within the capacity region of an ergodic IFC. The sum-rate
optimal policy described above achieves the C-MAC sum-capacity and a lower
bound on the IFC sum-capacity. We develop the conditions for the very strong
IFC and show that the C-MAC lower bound for one of the five disjoint cases is
tight for this IFC class. We define the conditions for the strong ergodic IFC
and prove that when these conditions are met, the IFC sum-capacity is the
C-MAC sum-capacity for one of three other disjoint cases or three boundary
cases. We also show that, in contrast to the non-fading case
\cite{cap_theorems:Sato_IC}, the constraints for both classes of IFC depend on
both the channel statistics and average power constraints.

\bigskip

The paper is organized as follows. In Section \ref{Section 2}, we model the
ergodic fading Gaussian C-MAC and IFC. In Section \ref{section 3} we present
the C-MAC capacity region and determine the power policies that maximize the
sum-capacity. In Section \ref{Sec_4}, we define the strong and very strong
ergodic IFC conditions and show that the sum-capacity in Section
\ref{section 3} is tight when the conditions for either the strong or the very
strong ergodic IFC hold.

\section{\label{Section 2}Channel Model and Preliminaries}%

\begin{figure}
[ptb]
\begin{center}
\includegraphics[
height=1.9233in,
width=3in
]%
{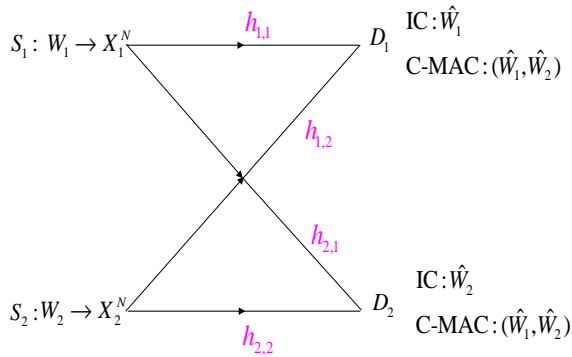}%
\caption{The two-user Gaussian\ IFC or C-MAC.}%
\label{Fig_IC}%
\end{center}
\end{figure}

A two-sender two-receiver Gaussian IFC consists of two source nodes $S_{1}$
and $S_{2}$, and two destination nodes $D_{1}$ and $D_{2}$ as shown in Fig.
\ref{Fig_IC}. Source $S_{k}$, $k=1,2$, uses the channel $N$ times to transmit
its messages $W_{k}$, distributed uniformly in the set $\,\{1,2,\ldots
,2^{B_{k}}\}$, to its intended receiver, $D_{k}$, at a rate $R_{k}=B_{k}/N$
bits per channel use. In each use of the channel, $S_{k}$ transmits the signal
$X_{k}$ while the destinations $D_{1}$ and $D_{2}$ receive $Y_{1}$ and $Y_{2}%
$, respectively, such that
\begin{align}
Y_{1}  &  =H_{1,1}X_{1}+H_{1,2}X_{2}+Z_{1}\label{IC_Y1}\\
Y_{2}  &  =H_{2,1}X_{1}+H_{2,2}X_{2}+Z_{2} \label{IC_Y2}%
\end{align}
where $Z_{1}$ and $Z_{2}$ are independent circularly symmetric complex
Gaussian noise random variables with zero means and unit variances. For the
special case when the messages at $S_{1}$ and $S_{2}$ are intended for both
destinations, the model defined by (\ref{IC_Y1}) and (\ref{IC_Y2}) results in
a two-user Gaussian C-MAC (see Fig. \ref{Fig_IC}). We write $\mathbf{H}$ to
denote the random matrix of fading states, $H_{k,m}$, for all $k,m=1,2$, such
that $\mathbf{H}$ is a realization for a given channel use of a jointly
stationary and ergodic (not necessarily Gaussian) fading process
\underline{$\mathbf{H}$}. Note that $H_{k,m}$ for all $k,m$, are not assumed
to be independent. We also assume that over $N$ uses of the channel, the
source transmissions are constrained in power according to%
\begin{equation}
\left.  \sum\limits_{i=1}^{N}\left\vert X_{k,i}\right\vert ^{2}\leq
N\overline{P}_{k}\right.  \text{ for all }k=1,2 \label{Avg_pwr_constraint}%
\end{equation}
where $X_{k,i}$ denotes the transmitted signal from source $k$ in the $i^{th}$
channel use. Since the sources know the fading states of the links on which
they transmit, they can allocate their transmitted signal power according to
the channel state information. We write $P_{k}(\mathbf{H})$ to denote the
power allocated at the $k^{th}$ transmitter as a function of the channel
states $\mathbf{H}$. For an ergodic fading channel, (\ref{Avg_pwr_constraint})
then simplifies to
\begin{equation}
\left.  \mathbb{E}\left[  P_{k}(\mathbf{H})\right]  \leq\overline{P}%
_{k},\right.  \text{ }k=1,2, \label{GIC_pwr_constraint}%
\end{equation}
where the expectation in (\ref{GIC_pwr_constraint}) is over the joint
distribution of $\mathbf{H}$. We write $\underline{P}\left(  \mathbf{H}%
\right)  $ to denote a vector of power allocations with entries $P_{k}%
(\mathbf{H})$, for all $k$, and define $\mathcal{P}$ to be the set of all
$\underline{P}\left(  \mathbf{H}\right)  $ whose entries satisfy
(\ref{GIC_pwr_constraint}). The capacity region $\mathcal{C}_{\text{IFC}}$
$\left(  \mathcal{C}_{\text{C-MAC}}\right)  $ of a two-user IFC\ (C-MAC) is
defined as the closure of the set of rate tuples $(R_{1},R_{2})$ such that the
destinations can decode their intended messages with an arbitrarily small
positive error probability $\epsilon$. For ease of notation, we henceforth
omit the functional dependence of $\underline{P}$ on $\mathbf{H}$. We write
random variables (e.g. $H_{k,j}$) with uppercase letters and their
realizations (e.g. $h_{k,j}$) with the corresponding lowercase letters. We
write $\mathcal{K}=\left\{  1,2\right\}  $ to denote the set of transmitters,
the notation $C(x)$ $=$ $\log(1+x)$ where the logarithm is to the base 2,
$\left(  x\right)  ^{+}=\max(x,0)$, and write $R_{\mathcal{S}}$ $=$ $%
{\textstyle\sum\nolimits_{k\in\mathcal{S}}}
R_{k}$ for any ${\mathcal{S}}$ $\subseteq\mathcal{K}$.

\section{\label{section 3}C-MAC: Sum-Capacity and Optimal Policy}

The capacity region of a two-transmitter (sender) two-receiver discrete
memoryless (d.m.) channel, now often referred to as a d.m. compound MAC, is
developed in \cite{cap_theorems:Ahlswede_CMAC}. For each choice of input
distribution at the two independent sources, this capacity region is an
intersection of the MAC capacity regions achieved at the two receivers. The
techniques in \cite{cap_theorems:Ahlswede_CMAC} can be easily extended to
develop the capacity region for a Gaussian C-MAC with fixed channel gains. For
the Gaussian C-MAC, one can show that Gaussian signaling achieves the capacity
region using the fact that Gaussian signaling maximizes the MAC region at each
receiver. Thus, the Gaussian C-MAC capacity region is an intersection of the
Gaussian\ MAC capacity regions achieved at $D_{1}$ and $D_{2}$. For a
stationary and ergodic process \underline{$\mathbf{H}$}, the channel in
(\ref{IC_Y1}) and (\ref{IC_Y2}) can be modeled as a set of parallel Gaussian
C-MACs, one for each fading instantiation $H$. For the ergodic fading case,
the capacity region $\mathcal{R}_{\text{C-MAC}}$, achieved over all
$\underline{P}\in\mathcal{P}$ is given by the following theorem.

\begin{theorem}
The capacity region, $\mathcal{C}_{\text{C-MAC}}$, of an ergodic fading
Gaussian\ C-MAC is%
\begin{equation}
\mathcal{C}_{\text{C-MAC}}=\bigcup_{\underline{P}\in\mathcal{P}}\left\{
\mathcal{C}_{1}\left(  \underline{P}\right)  \cap\mathcal{C}_{2}\left(
\underline{P}\right)  \right\}
\end{equation}
where for all $\mathcal{S}\subseteq\mathcal{K}$ and $j=1,2,$ we have%
\begin{equation}
\mathcal{C}_{j}\left(  \underline{P}\right)  =\left\{  \left(  R_{1}%
,R_{2}\right)  :R_{\mathcal{S}}\leq\mathbb{E}\left[  C\left(  \sum
_{k\in\mathcal{S}}\left\vert H_{j,k}\right\vert ^{2}P_{k}\right)  \right]
\right\}  . \label{CMAC_Cj}%
\end{equation}

\end{theorem}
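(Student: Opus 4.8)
The plan is to establish the region by the standard two-part argument: achievability (coding theorem) and converse (outer bound), leveraging the reduction of the ergodic fading C-MAC to a family of parallel Gaussian C-MACs indexed by the fading state $\mathbf{H}$, together with the known capacity result for the discrete-memoryless/Gaussian fixed-gain compound MAC from \cite{cap_theorems:Ahlswede_CMAC}. Throughout I would fix a power policy $\underline{P}\in\mathcal{P}$ and show that the rate region $\mathcal{C}_{1}(\underline{P})\cap\mathcal{C}_{2}(\underline{P})$ is achievable with that policy; taking the union over $\underline{P}\in\mathcal{P}$ and then the closure gives the claimed region.

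For achievability, I would use a block-coding scheme in which, within each coherence block, the sources use independent Gaussian codebooks and allocate power $P_{k}(\mathbf{H})$ according to the realized state, with both receivers performing joint (simultaneous) decoding of $(W_{1},W_{2})$. Because the fading process \underline{$\mathbf{H}$} is stationary and ergodic and the power policy satisfies $\mathbb{E}[P_{k}(\mathbf{H})]\le\overline{P}_{k}$, the long-term average power constraint \eqref{GIC_pwr_constraint} is met (with a standard truncation/limiting argument if $P_{k}(\mathbf{H})$ is unbounded). A typicality/error-probability analysis at receiver $j$, exactly as in the MAC coding theorem but averaged over the state, shows that the error probability vanishes provided $(R_{1},R_{2})$ lies strictly inside the region defined by the three inequalities $R_{\mathcal{S}}\le\mathbb{E}[C(\sum_{k\in\mathcal{S}}|H_{j,k}|^{2}P_{k})]$ for $\mathcal{S}\subseteq\mathcal{K}$; requiring success at both receivers forces $(R_{1},R_{2})\in\mathcal{C}_{1}(\underline{P})\cap\mathcal{C}_{2}(\underline{P})$. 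The ergodic theorem is what converts the per-block mutual information terms into the expectations $\mathbb{E}[C(\cdot)]$ appearing in \eqref{CMAC_Cj}.

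For the converse, since each message must be decoded at \emph{both} destinations, any achievable $(R_{1},R_{2})$ must lie in the capacity region of the MAC seen at $D_{1}$ \emph{and} in that seen at $D_{2}$. Applying Fano's inequality at receiver $j$ and the usual single-letterization for the fading MAC with full CSI at transmitters and receivers (as in \cite{cap_theorems:TH01}), one obtains, for some feasible power policy, $R_{\mathcal{S}}\le\mathbb{E}[C(\sum_{k\in\mathcal{S}}|H_{j,k}|^{2}P_{k})]$ for every $\mathcal{S}$ and $j$; Gaussian inputs are optimal at each receiver because conditioned on $\mathbf{H}$ the channel is an additive Gaussian MAC. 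Intersecting the two outer bounds and taking the union over the power policy used gives containment in $\mathcal{C}_{\text{C-MAC}}$, and closure handles the boundary.

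The main obstacle, and the only point requiring genuine care rather than citation, is the interchange of limits inherent in the ergodic setting: making the block-Markov/coherence-block achievability argument rigorous when the fading states within a block are not constant, handling possibly unbounded $P_{k}(\mathbf{H})$ under \eqref{GIC_pwr_constraint} via truncation, and justifying that the per-realization Gaussian-MAC bounds can be averaged to yield $\mathbb{E}[C(\cdot)]$ both in the achievability typicality analysis and in the converse single-letterization. These are by now standard in the ergodic-fading MAC literature (\cite{cap_theorems:Knopp_Humblet,cap_theorems:TH01}), so I would invoke those techniques and focus the exposition on the new structural feature here, namely the intersection of the two receivers' MAC regions under a \emph{common} power policy $\underline{P}$.
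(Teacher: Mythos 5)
Your proposal is correct and follows essentially the same route as the paper: achievability via Gaussian signaling with both receivers jointly decoding both messages, and a converse that applies the ergodic fading MAC techniques of \cite{cap_theorems:TH01} (Fano plus limiting arguments) at each receiver separately, then intersects the two MAC regions for a fixed $\underline{P}$ and takes the union over $\mathcal{P}$. The technical caveats you flag (unbounded policies, interchange of limits) are exactly the ones the paper defers to \cite[Appendix B]{cap_theorems:TH01}.
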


\begin{proof}
The achievability follows from using Gaussian signaling and decoding at both
receivers. For the converse, we apply the proof techniques developed for the
capacity of an ergodic fading MAC in \cite{cap_theorems:TH01}. For any
$\underline{P}\in\mathcal{P}$, one can use limiting arguments (see for e.g.,
\cite[Appendix B]{cap_theorems:TH01}) to show that for asymptotically
error-free performance at receiver $j$, for all $j$, the achievable region has
to be bounded as%
\begin{equation}%
\begin{array}
[c]{cc}%
R_{\mathcal{S}}\leq\mathbb{E}\left[  \log\left(  1+\sum_{k\in\mathcal{S}%
}\left\vert H_{j,k}\right\vert ^{2}P_{k}\right)  \right]  , & j=1,2.
\end{array}
\end{equation}
The proof is completed by taking the union of the region over all
$\underline{P}\in\mathcal{P}$.
\end{proof}

\begin{corollary}
\label{Cor_1}The interference channel ergodic capacity region $\mathcal{C}%
_{\text{IFC}}$ is bounded as $\mathcal{C}_{\text{C-MAC}}\subseteq
\mathcal{C}_{\text{IFC}}$.
\end{corollary}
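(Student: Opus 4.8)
The plan is to argue that any reliable code for the C-MAC can be reinterpreted as a reliable code for the IFC on the same channel, so that every rate pair achievable for the C-MAC is also achievable for the IFC, giving $\mathcal{C}_{\text{C-MAC}}\subseteq\mathcal{C}_{\text{IFC}}$. The key observation is that the two channel models in (\ref{IC_Y1})--(\ref{IC_Y2}) are physically identical; they differ only in the decoding \emph{requirement}. In the C-MAC, receiver $D_j$ must reliably decode \emph{both} $W_1$ and $W_2$, whereas in the IFC, $D_j$ need only decode $W_j$. Thus the decoding task faced by each receiver in the IFC is strictly easier than in the C-MAC.

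Concretely, I would proceed as follows. Fix a rate pair $(R_1,R_2)\in\mathcal{C}_{\text{C-MAC}}$ and let $\epsilon>0$. By definition there is a block length $N$, a power policy $\underline{P}\in\mathcal{P}$, encoders $f_k$ mapping $W_k$ (and the fading realizations, via $P_k(\mathbf{H})$) to channel inputs satisfying (\ref{Avg_pwr_constraint}), and decoders $g_1,g_2$ such that each $D_j$ recovers $(W_1,W_2)$ with error probability at most $\epsilon$. Now use the \emph{same} encoders and the same power policy for the IFC, and for the IFC decoder at $D_j$ simply take the $j$-th coordinate of the C-MAC decoder output $g_j$, i.e.\ $\hat{W}_j := (g_j)_j$. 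Since $\Pr[\hat{W}_j \ne W_j] \le \Pr[g_j \ne (W_1,W_2)] \le \epsilon$ for each $j$, the same $(R_1,R_2)$ is achievable on the IFC with arbitrarily small error. Taking the closure over all such rate pairs and all $\underline P \in \mathcal P$ yields $\mathcal{C}_{\text{C-MAC}}\subseteq\mathcal{C}_{\text{IFC}}$.

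There is essentially no hard step here: the corollary is a direct consequence of the fact that multicast decoding subsumes unicast decoding on a common physical channel, combined with the fact that the power-constraint set $\mathcal{P}$ is the same in both problems. The only point deserving a line of care is that the IFC decoder must not need the interfering message: this is automatic because we discard the unwanted coordinate of $g_j$ rather than requiring it. One could alternatively phrase the argument purely in terms of the single-letter regions, noting that $\mathcal{C}_j(\underline P)$ in (\ref{CMAC_Cj}) — the MAC region at $D_j$ treating both users as decoded — is contained in any IFC achievable region at $D_j$, and hence the intersection over $j$ is contained in $\mathcal{C}_{\text{IFC}}$; but the operational coding argument above is cleaner and avoids invoking any particular achievable-region characterization for the IFC.
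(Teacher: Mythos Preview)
Your proposal is correct and follows essentially the same approach as the paper: the paper's one-line justification is that any rate pair in $\mathcal{C}_{\text{C-MAC}}$ is achievable for the IFC because $\mathcal{C}_{\text{C-MAC}}$ is the capacity region when both messages are decoded at both receivers. Your operational argument---reusing the C-MAC encoders and discarding the unwanted coordinate of each receiver's decoder output---is simply a careful spelling-out of that same observation.
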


Corollary \ref{Cor_1} follows from the argument that a rate pair in
$\mathcal{C}_{\text{C-MAC}}$ is achievable for the IFC since $\mathcal{C}%
_{\text{C-MAC}}$ is the capacity region when both messages are decoded at both receivers.

\begin{remark}
The capacity region $\mathcal{C}_{\text{C-MAC}}$ is convex. This follows from
the convexity of the set $\mathcal{P}$ and the concavity of the $\log$ function.
\end{remark}

The capacity region $\mathcal{C}_{\text{C-MAC}}$ is a union of the
intersection of the pentagons $\mathcal{C}_{1}\left(  \underline{P}\right)  $
and $\mathcal{C}_{2}\left(  \underline{P}\right)  $ achieved at $D_{1}$ and
$D_{2},$ respectively, where the union is over all $\underline{P}%
\in\mathcal{P}$. The region $\mathcal{C}_{\text{C-MAC}}$ is convex, and thus,
each point on the boundary of $\mathcal{C}_{\text{C-MAC}}$ is obtained by
maximizing the weighted sum $\mu_{1}R_{1}$ $+$ $\mu_{2}R_{2}$ over all
$\underline{P}\in\mathcal{P}$, and for all $\mu_{1}>0$, $\mu_{2}>0$.
Specifically, we determine the optimal policy $\underline{P}^{\ast}$ that
maximizes the sum-rate $R_{1}+R_{2}$ when $\mu_{1}$ $=$ $\mu_{2}$ $=$ $1$.
Using the fact that the rate regions $\mathcal{C}_{1}\left(  \underline
{P}\right)  $ and $\mathcal{C}_{2}\left(  \underline{P}\right)  $ are
pentagons, in Figs. \ref{Fig_Case12} and \ref{Fig_Case3abc} we illustrate the
five possible choices for the sum-rate resulting from an intersection of
$\mathcal{C}_{1}\left(  \underline{P}\right)  $ and $\mathcal{C}_{2}\left(
\underline{P}\right)  $ (see also \cite{cap_theorems:SankarLiang_Conf}).%
\begin{figure}
[ptb]
\begin{center}
\includegraphics[
height=1.535in,
width=3.3892in
]%
{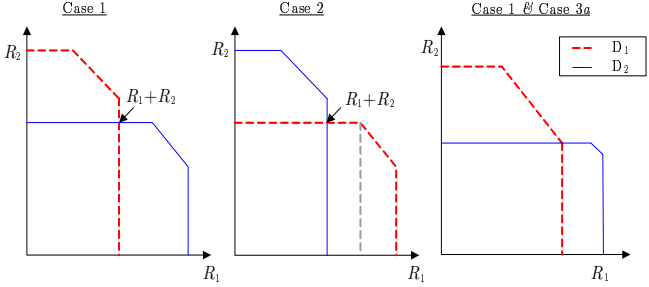}%
\caption{The rate region and sum-rate for cases $1$, $2$, and boundary case
$(1,3a)$.}%
\label{Fig_Case12}%
\end{center}
\end{figure}
\begin{figure}
[ptb]
\begin{center}
\includegraphics[
height=1.5082in,
width=3.2984in
]%
{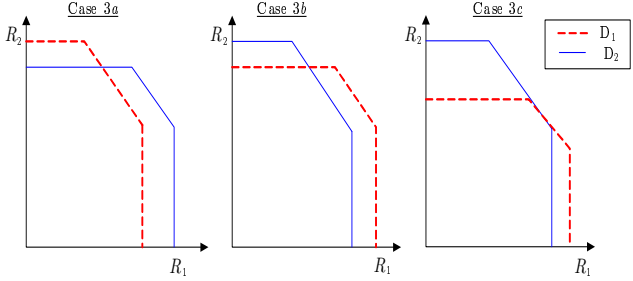}%
\caption{The rate region and sum-rate for cases $3a$, $3b$, and $3c$.}%
\label{Fig_Case3abc}%
\end{center}
\end{figure}

We broadly categorize the five possible choices for the sum-rate resulting
from the intersection of two pentagons into the sets of \textit{active} and
\textit{inactive} cases. The \textit{inactive set},\textit{ }consisting of
cases $1$ and $2$, includes all intersections of $\mathcal{C}_{1}\left(
\underline{P}\right)  $ and $\mathcal{C}_{2}\left(  \underline{P}\right)  $
for which the constraints on the two sum-rates are not active, i.e., no rate
tuple on the sum-rate plane achieved at one of the receivers lies within or on
the boundary of the rate region achieved at the other receiver. On the other
hand, the intersections for which there exists at least one such rate tuple
such that the two sum-rates constraints are active belong to the
\textit{active set}. This includes cases $3a$, $3b$, and $3c$ shown in Fig.
\ref{Fig_Case12} where the sum-rate at $D_{1}$ is smaller, larger, or equal,
respectively, to that achieved at $D_{2}$. By definition, the active set also
include the \textit{boundary cases} where there is exactly one such rate
pair.\ However, to simplify the optimization problem, we consider the six
boundary cases separately and denote them as cases $\left(  l,n\right)  $,
$l=1,2,$ and $n=3a,3b,3c$. We write $\mathcal{B}_{i}\subseteq\mathcal{P}$ and
$\mathcal{B}_{l,n}\subseteq\mathcal{P}$ to denote the set of power policies
that achieve case $i$, $i=1,2,3a,3b,3c$ and case $\left(  l,n\right)  $,
$l=1,2$, $n=3a,3b,3c$, respectively. Observe that cases $1$ and $2$ do not
share a boundary since such a transition (see Fig. \ref{Fig_Case12}) requires
passing through case $3a$ or $3b$ or $3c$. Finally, note that Fig.
\ref{Fig_Case3abc} illustrates two specific $\mathcal{C}_{1}$ and
$\mathcal{C}_{2}$ regions for $3a$, $3b$, and $3c$.

The occurrence of any one of the disjoint cases depends on both the channel
statistics and the policy $\underline{P}$. Since it is not straightforward to
know \textit{a priori} the power allocations that achieve a certain case, we
maximize the sum-capacity for each case over all allocations in $\mathcal{P}$
and write $\underline{P}^{(i)}$ and $\underline{P}^{(l,n)}$ to denote the
optimal solution for case $i$ and case $(l,n)$, respectively. Explicitly
including boundary cases ensures that the sets $\mathcal{B}_{i}$ and
$\mathcal{B}_{l,n}$ are disjoint for all $i$ and $(l,n)$, i.e., these sets are
either open or half-open sets such that no two of them share a boundary (see
\cite{cap_theorems:SankarLiang_Conf}). This in turn simplifies the convex
optimization as follows. Let $\underline{P}^{(i)}$ be the optimal policy
maximizing the sum-rate for case $i$ over all $\underline{P}\in\mathcal{P}$.
The optimal $\underline{P}^{(i)}$ must satisfy the conditions for case $i$,
i.e., $\underline{P}^{(i)}$ $\in$ $\mathcal{B}_{i}$. If the conditions are
satisfied, we prove the optimality of \underline{$P$}$^{\left(  i\right)  }$
using the fact that the rate functions for each case are concave. On the other
hand, when \underline{$P$}$^{\left(  i\right)  }$ $\not \in $ $\mathcal{B}%
_{i}$, it can be shown that $R_{1}+R_{2}$ achieves its maximum outside
$\mathcal{B}_{i}$. The proof again follows from the fact that $R_{1}+R_{2}$
for all cases is a concave function of \underline{$P$} for all $\underline{P}$
$\in$ $\mathcal{P}$. Thus, when \underline{$P$}$^{\left(  i\right)  }$
$\not \in $ $\mathcal{B}_{i}$, for every \underline{$P$} $\in$ $\mathcal{B}%
_{i}$ there exists a \underline{$P$}$^{\prime}$ $\in$ $\mathcal{B}_{i}$ with a
larger sum-rate. Combining this with the fact that the sum-rate expressions
are continuous while transitioning from one case to another at the boundary of
the open set $\mathcal{B}_{i}$, ensures that the maximum sum-rate is achieved
by some $\underline{P}$ $\not \in $ $\mathcal{B}_{i}$. Similar arguments
justify maximizing the optimal policy for each case over all $\mathcal{P}$.

The following theorem summarizes the optimal power policy for each case. The
optimal $\underline{P}^{\left(  i\right)  }$ or $\underline{P}^{\left(
l,n\right)  }$ maximizing the sum-rate for case $i$ or $\left(  l,n\right)  $
satisfies the conditions for only that case and is determined using Lagrange
multipliers and the \textit{Karush-Kuhn-Tucker} (KKT) conditions.

\begin{theorem}
\label{Th_1}A policy $\underline{P}^{\left(  i\right)  }$ or $\underline
{P}^{\left(  l,n\right)  }$ maximizes the sum-rate for case $i$,
$i=1,2,3a,3b,3c,$ or the boundary case $\left(  l,n\right)  $, $l=1,2$, and
$n=3a,3b,3c,$ when the entries \underline{$P$}$_{1}^{\left(  \cdot\right)  }$
and \underline{$P$}$_{2}^{\left(  \cdot\right)  }$ of $\underline{P}^{\left(
\cdot\right)  }$ satisfy
\begin{equation}%
\begin{array}
[c]{cc}%
f_{k}^{\left(  \cdot\right)  }\leq\nu_{k}\ln2 & k=1,2
\end{array}
\label{fk_bound}%
\end{equation}
where $\nu_{k}$ is chosen to satisfy (\ref{GIC_pwr_constraint}) such that%
\begin{equation}%
\begin{array}
[c]{ll}%
f_{k}^{(i)}=\frac{\left\vert h_{m,k}\right\vert ^{2}}{\left(  1+\left\vert
h_{m,k}\right\vert ^{2}P_{k}\right)  } &
\begin{array}
[c]{c}%
i=1:\left(  m,k\right)  =(1,1),(2,2)\\
i=2:\left(  m,k\right)  =(1,2),(2,1)
\end{array}
\\
f_{k}^{(i)}=\frac{\left\vert h_{m,k}\right\vert ^{2}}{\left(  1+\sum_{j=1}%
^{2}\left\vert h_{m,j}\right\vert ^{2}P_{j}\right)  } &
\begin{array}
[c]{c}%
i=3a:m=1\\
i=3b:m=2
\end{array}
\end{array}
\label{fk_defn}%
\end{equation}%
\begin{equation}%
\begin{array}
[c]{ll}%
\begin{array}
[c]{l}%
f_{k}^{\left(  i\right)  }=\left(  1-\alpha\right)  f_{k}^{\left(  3a\right)
}\\
~\ \ \ \ \ \ \ \ \ +\alpha f_{k}^{\left(  3b\right)  }%
\end{array}
& \text{ }i=3c\\%
\begin{array}
[c]{l}%
f_{k}^{\left(  l,n\right)  }=\left(  1-\alpha\right)  f_{k}^{\left(  n\right)
}\\
\text{ \ \ \ \ \ \ \ \ \ \ }+\alpha f_{k}^{\left(  l\right)  }\text{
\ \ \ \ \ \ }%
\end{array}
& \left(  l,n\right)  =(1,3a),(2,3b)\\%
\begin{array}
[c]{l}%
f_{k}^{\left(  l,3c\right)  }=\alpha_{3}f_{k}^{\left(  3a\right)  }\\
\text{ \ \ \ \ \ \ \ \ \ }+\alpha_{2}f_{k}^{\left(  3b\right)  }+\alpha
_{1}f_{k}^{(l)}%
\end{array}
& l=1,2
\end{array}
\end{equation}
with $\alpha$, $\alpha_{1}$, $\alpha_{2}$, and $\alpha_{3}=1-\alpha_{1}%
-\alpha_{2}$ chosen to satisfy the appropriate boundary conditions. The
optimal $\underline{P}^{\left(  i\right)  }\in\mathcal{B}_{i}$ or
$\underline{P}^{\left(  l,n\right)  }\in\mathcal{B}_{l,n}$ satisfies the
condition for case $i$ or case $(l,n)$, respectively. The conditions for each
case are given as%
\begin{align}
&
\begin{array}
[c]{cc}%
\text{Case }1: &
\begin{array}
[c]{c}%
I(X_{1};Y_{1}|X_{2}\mathbf{H})<I(X_{1};Y_{2}|\mathbf{H})\\
I(X_{2};Y_{2}|X_{1}\mathbf{H})<I(X_{2};Y_{1}|\mathbf{H})
\end{array}
\end{array}
\label{Case1_Cond}\\
&
\begin{array}
[c]{cc}%
\text{Case }2: &
\begin{array}
[c]{c}%
I(X_{1};Y_{1}|X_{2}\mathbf{H})<I(X_{1};Y_{2}|\mathbf{H})\\
I(X_{2};Y_{2}|X_{1}\mathbf{H})<I(X_{2};Y_{1}|\mathbf{H})
\end{array}
\end{array}
\label{Case2_Cond}%
\end{align}%
\begin{align}
&
\begin{array}
[c]{cc}%
\text{Case }3a: & I(X_{1}X_{2};Y_{1}|\mathbf{H})<I(X_{1}X_{2};Y_{2}%
|\mathbf{H})
\end{array}
\label{Case3a_Cond}\\
&
\begin{array}
[c]{cc}%
\text{Case }3b: & I(X_{1}X_{2};Y_{1}|\mathbf{H})>I(X_{1}X_{2};Y_{2}%
|\mathbf{H})
\end{array}
\label{Case3b_Cond}\\
&
\begin{array}
[c]{cc}%
\text{Case }3c: & I(X_{1}X_{2};Y_{1}|\mathbf{H})=I(X_{1}X_{2};Y_{2}%
|\mathbf{H})
\end{array}
\label{Case3c_Cond}%
\end{align}%
\begin{equation}%
\begin{array}
[c]{cc}%
\text{Case }\left(  l,n\right)  : &
\begin{array}
[c]{c}%
\text{Satisfy cases }n\text{ }\&\text{ }l\text{ with }\\
\text{equality for one case }l\text{ condition }%
\end{array}
\end{array}
\label{CaseBC_Cond}%
\end{equation}
where in (\ref{Case1_Cond})-(\ref{CaseBC_Cond}), $X_{1}$ and $X_{2}$ are
Gaussian distributed subject to (\ref{GIC_pwr_constraint}). The $\underline
{P}^{\ast}$ that maximizes the sum-capacity is obtained by computing
$\underline{P}^{(i)}$ or $\underline{P}^{(l,n)}$ starting with the inactive
cases, followed by the boundary cases $(l,n)$, and finally the active cases
$3a,$ $3b,$ and $3c$ until for some case the corresponding $\underline
{P}^{(i)}$ or $\underline{P}^{(l,n)}$ satisfies the case conditions.
\end{theorem}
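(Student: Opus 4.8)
Here is how I would go about proving Theorem~\ref{Th_1}.

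The plan is to turn the sum-rate maximization into a finite family of ordinary convex programs, one per case, and to read off (\ref{fk_bound})--(\ref{fk_defn}) as their Karush--Kuhn--Tucker conditions. Fix $\underline{P}\in\mathcal{P}$ and abbreviate the three edge lengths of the MAC pentagon seen at $D_{j}$ by $a_{j}=I(X_{1};Y_{j}|X_{2}\mathbf{H})=\mathbb{E}[C(|H_{j,1}|^{2}P_{1})]$, $b_{j}=I(X_{2};Y_{j}|X_{1}\mathbf{H})=\mathbb{E}[C(|H_{j,2}|^{2}P_{2})]$, and $c_{j}=I(X_{1}X_{2};Y_{j}|\mathbf{H})=\mathbb{E}[C(|H_{j,1}|^{2}P_{1}+|H_{j,2}|^{2}P_{2})]$. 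Inspecting the five configurations of $\mathcal{C}_{1}(\underline{P})\cap\mathcal{C}_{2}(\underline{P})$ catalogued in Figs.~\ref{Fig_Case12}--\ref{Fig_Case3abc} (cf. \cite{cap_theorems:SankarLiang_Conf}), the vertex that maximizes $R_{1}+R_{2}$ over the intersection is $(a_{1},b_{2})$ in case $1$, $(a_{2},b_{1})$ in case $2$, a point of the sum-face $R_{1}+R_{2}=c_{1}$ of $\mathcal{C}_{1}$ in case $3a$, of $R_{1}+R_{2}=c_{2}$ in case $3b$, and a point common to both sum-faces ($c_{1}=c_{2}$) in case $3c$; at a boundary case $(l,n)$ the relevant case-$l$ corner sits on the relevant sum-face, so the common sum-value satisfies e.g. $a_{1}+b_{2}=c_{1}$ for $(1,3a)$ and $a_{2}+b_{1}=c_{2}$ for $(2,3b)$. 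Conditions (\ref{Case1_Cond})--(\ref{CaseBC_Cond}) are exactly the statement that this vertex lies in the other pentagon; the remaining pentagon inequalities it must satisfy (e.g. $b_{2}\le b_{1}$ and $a_{1}\le a_{2}$ in case $1$) are automatic, since conditioning on the independent input cannot decrease the mutual information, whence $b_{2}<I(X_{2};Y_{1}|\mathbf{H})\le I(X_{2};Y_{1}|X_{1}\mathbf{H})=b_{1}$ and likewise for $a$. Each objective $a_{1}+b_{2}$, $a_{2}+b_{1}$, $c_{1}$, $c_{2}$ is a concave functional of $\underline{P}$ because $C(\cdot)=\log(1+\cdot)$ is concave and $\mathbb{E}[\cdot]$ preserves concavity, and $\mathcal{P}$ is convex.

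Next I would solve each case as a constrained concave maximization. For cases $3c$ and $(l,n)$ I would instead maximize a weighted objective --- $(1-\alpha)c_{1}+\alpha c_{2}$ for $3c$; $(1-\alpha)c_{1}+\alpha(a_{1}+b_{2})$ for $(1,3a)$ and $(1-\alpha)c_{2}+\alpha(a_{2}+b_{1})$ for $(2,3b)$; $\alpha_{3}c_{1}+\alpha_{2}c_{2}+\alpha_{1}(a_{1}+b_{2})$ for $(1,3c)$ and its analogue for $(2,3c)$ --- with the weights chosen so that the defining equalities ($c_{1}=c_{2}$ and/or a case-$l$ condition at equality) hold at the optimum, which is legitimate since on the corresponding manifold the weighted objective equals the common sum-value. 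I then attach a Lagrange multiplier $\nu_{k}$ to each average-power constraint $\mathbb{E}[P_{k}]\le\overline{P}_{k}$, differentiate under the expectation with respect to $P_{k}(\mathbf{h})$, and use $\partial_{P_{k}}C(|h_{m,k}|^{2}P_{k})=f_{k}^{(1)}/\ln2$ (cases $1,2$) and $\partial_{P_{k}}C(\sum_{j}|h_{m,j}|^{2}P_{j})=f_{k}^{(3a)}/\ln2$ or $f_{k}^{(3b)}/\ln2$ (cases $3a,3b$), so that a convex combination of objectives differentiates to the same convex combination of these derivatives. Stationarity then reads exactly $f_{k}^{(\cdot)}\le\nu_{k}\ln2$ with equality on the set where $P_{k}>0$, i.e. the water-filling form (\ref{fk_bound}) together with (\ref{fk_defn}) and the stated convex-combination formulas for $f_{k}^{(3c)}$, $f_{k}^{(l,n)}$, $f_{k}^{(l,3c)}$. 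Since each program is a concave maximization over a convex set, these KKT conditions are also sufficient, so a policy meeting (\ref{fk_bound}) maximizes the sum-rate within its case; the $\nu_{k}$ are pinned down by (\ref{GIC_pwr_constraint}), and the $\alpha$'s by the equality constraints, whose solvability follows from an intermediate-value argument --- $c_{1}-c_{2}$ (resp. the case-$l$ slack) varies continuously, and monotonically in the weight, as $\alpha$ sweeps $[0,1]$.

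The last assertion --- that computing $\underline{P}^{(i)}$/$\underline{P}^{(l,n)}$ in the order inactive $\to$ boundary $\to$ active and stopping at the first self-consistent solution yields the global optimum $\underline{P}^{\ast}$ --- is where the real work lies, and I would argue it as in the paragraph preceding the theorem. If the KKT policy for case $i$ lies in $\mathcal{B}_{i}$, concavity of that case's objective makes it the sum-capacity maximizer for that configuration and hence globally, since the case conditions then hold; if it lies outside the (half-)open set $\mathcal{B}_{i}$, then, the objective being concave with its unconstrained maximizer outside $\mathcal{B}_{i}$, its supremum over $\mathcal{B}_{i}$ is approached on $\partial\mathcal{B}_{i}$, and since the sum-rate functional is continuous across case boundaries that value is attained in an adjacent case. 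Because $\{\mathcal{B}_{i}\}\cup\{\mathcal{B}_{l,n}\}$ is a finite partition of $\mathcal{P}$, the procedure must terminate at some case whose KKT policy is self-consistent, and that policy is $\underline{P}^{\ast}$. I expect the main obstacle to be precisely this bookkeeping: ruling out the possibility that \emph{no} case is self-consistent and checking that the terminal $\underline{P}^{\ast}$ does not depend on the testing order both hinge on the continuity of the sum-rate across case boundaries and on the disjointness of the (half-)open sets $\mathcal{B}_{i},\mathcal{B}_{l,n}$ --- which is exactly why the boundary cases were isolated as separate pieces.
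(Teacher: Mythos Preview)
Your proposal is correct and follows essentially the same approach as the paper: the paper itself does not supply a formal proof of Theorem~\ref{Th_1} but only sketches it in the surrounding text, stating that the optimal policy for each case ``is determined using Lagrange multipliers and the Karush--Kuhn--Tucker (KKT) conditions'' and invoking the concavity of the per-case objectives together with continuity across case boundaries to justify the elimination algorithm. Your write-up simply makes this explicit---identifying the per-case objective from the pentagon geometry, differentiating under the expectation to read off (\ref{fk_bound})--(\ref{fk_defn}), handling the boundary cases via weighted objectives, and reproducing the concavity/continuity argument for the algorithm---so it is a faithful, more detailed execution of the paper's outline rather than a different route.
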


From (\ref{fk_bound}) and (\ref{fk_defn}), one can easily verify that for the
inactive cases $1$ and $2$ the optimal policies involve the classic
water-filling solution over point-to-point links$.$ Specifically, the optimal
policies for cases $1$ and $2$ simplify to water-filling over the two
bottle-neck links $(S_{1}\rightarrow D_{1})$, $(S_{2}\rightarrow D_{2})$ and
$(S_{1}\rightarrow D_{2})$, $(S_{2}\rightarrow D_{1})$, respectively. On the
other hand, for the active cases $3a$ and $3b$, the optimal allocation at each
source simplifies to the opportunistic water-filling allocation for a MAC
\cite{cap_theorems:Knopp_Humblet,cap_theorems:TH01} such that in each channel
use the source with the larger $f_{k}^{\left(  i\right)  }/\nu_{k}$ for case
$i,$ $i=3a,3b$, transmits. Observe that the water-filling solutions are with
respect to the receiver that achieves the smaller sum-capacity. Finally, for
all the boundary cases including case $3c$, the optimal policy for source $k$
is still an opportunistic solution such that the source with the larger
$f_{k}^{(i)}/\nu_{k}$ or $f_{k}^{(l,n)}/\nu_{k}$, $i=3c$ and for all $(l,n)$,
transmits. However, unlike the other cases, the optimal policy at each source
for the boundary cases is no longer a water-filling solution; instead for each
channel instantiation the optimal policy at source $k$ satisfies
(\ref{fk_bound}) with equality when the users are opportunistically scheduled.

The conditions in (\ref{Case1_Cond}) and (\ref{Case2_Cond}) for the two
inactive cases exclude all other cases and define the disjoint sets
$\mathcal{B}_{1}$ and $\mathcal{B}_{2}$. Similarly, the conditions for the six
boundary cases define the disjoint sets $\mathcal{B}_{l,n}$ for all $\left(
l,n\right)  $. However, the conditions for $3a$, $3b$, and $3c$ can be
satisfied by the boundary cases. To ensure that the sets $\mathcal{B}_{3a}$,
$\mathcal{B}_{3b}$, and $\mathcal{B}_{3c}$ are disjoint from all other sets,
the algorithm for determining the optimal \underline{$P$}$^{\ast}$ requires
eliminating a case at a time starting from case $1$. Thus, the algorithm first
eliminates the inactive cases, and then checks for the boundary cases, and
finally checks for cases $3a,$ $3b$, and $3c$.

\begin{remark}
The capacity region, $\mathcal{C}_{\text{C-MAC}}$ can be completely
characterized by using the same approach to maximize the sum $\mu_{1}R_{1}%
+\mu_{2}R_{2}$, for all $\left(  \mu_{1},\mu_{2}\right)  $ pairs. In general,
each tuple on the boundary of $\mathcal{C}_{\text{C-MAC}}$ may be maximized by
a different case, and thus, the optimal policy is also a function of $\left(
\mu_{1},\mu_{2}\right)  $.
\end{remark}

\section{\label{Sec_4}IFC:\ Converse}

We now apply the results in Theorem \ref{Th_1} to the ergodic fading IFC. For
the IFC, the power policies satisfying (\ref{fk_bound}) and (\ref{fk_defn})
are achievable when $D_{1}$ and $D_{2}$ decode messages from both sources; the
resulting C-MAC sum-capacity is a lower bound on the IFC sum-capacity. Below,
we present a converse to show that these sum-rate lower bounds are tight for
the classes of strong and very strong ergodic fading IFC. The convex capacity
region of the ergodic fading two-user IFC, $\mathcal{C}_{\text{IFC}}$, can be
bounded by hyperplanes $C\left(  \mu_{1},\mu_{2}\right)  $ such that for all
$\mu_{1}>0$ and $\mu_{2}>0$, we have%
\begin{equation}
\mathcal{C}_{\text{IFC}}=\left\{  \left(  R_{1},R_{2}\right)  :\mu_{1}%
R_{1}+\mu_{2}R_{2}\leq C\left(  \mu_{1},\mu_{2}\right)  \right\}
\end{equation}
subject to (\ref{GIC_pwr_constraint}). The boundary of $\mathcal{C}%
_{\text{IFC}}$ is determined by maximizing $\mu_{1}R_{1}+\mu_{2}R_{2}$ for
each choice of $(\mu_{1},\mu_{2})$ over all $\underline{P}\in\mathcal{P}$. For
the sum-capacity, we set $\mu_{1}=\mu_{2}=1$.

\subsection{Very Strong Ergodic IFC}

\begin{definition}
A very strong ergodic fading IFC with respect to the tuple $\left(  \mu
_{1},\mu_{2}\right)  $ results when a $\underline{P}\in\mathcal{P}$ and
$\mathbf{H}$ satisfy
\begin{equation}%
\begin{array}
[c]{c}%
I(X_{1};Y_{1}|X_{2}\mathbf{H})<I(X_{1};Y_{2}|\mathbf{H})\\
I(X_{2};Y_{2}|X_{1}\mathbf{H})<I(X_{2};Y_{1}|\mathbf{H})
\end{array}
\label{VSIFC_Cond}%
\end{equation}
for all choices of $X_{1}$ and $X_{2}$.
\end{definition}

\begin{theorem}
\label{Th_VS}The sum-capacity of a class of very strong ergodic Gaussian IFCs
is
\begin{equation}
\sum_{k=1}^{2}\mathbb{E}\left[  \log\left(  1+\left\vert H_{k,k}\right\vert
^{2}P_{k}^{\ast}\left(  \mathbf{H}\right)  \right)  \right]
\label{VIFC_Con_SC}%
\end{equation}
where \underline{$P$}$_{k}^{\ast}\left(  \mathbf{H}\right)  $, for all $k$, is
the optimal water-filling solution for single-sender single-receiver ergodic
fading links.
\end{theorem}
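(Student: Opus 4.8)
The plan is to establish a matching upper bound to the lower bound provided by Theorem~1 (via Corollary~1 and Theorem~2). The lower bound is clear: since the very strong conditions \eqref{VSIFC_Cond} coincide with the Case~1 conditions \eqref{Case1_Cond}, the optimal C-MAC policy $\underline{P}^{(1)}$ is the point-to-point water-filling solution $\underline{P}^{\ast}$, and $\mathcal{C}_{\text{C-MAC}} \subseteq \mathcal{C}_{\text{IFC}}$ gives achievability of \eqref{VIFC_Con_SC} for the IFC. So the real content is the converse: no IFC code can beat the sum of the two interference-free single-link ergodic capacities.

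First I would fix any sequence of codes with rates $(R_1,R_2)$ and vanishing error probability under the power constraint \eqref{Avg_pwr_constraint}, and apply Fano's inequality at each receiver: $NR_1 \le I(W_1;Y_1^N) + N\epsilon_N$ and $NR_2 \le I(W_2;Y_2^N) + N\epsilon_N$, conditioning throughout on the (globally known) fading sequence $\mathbf{H}^N$. The key step is then a genie-aided argument in the style of Carleial for very strong interference: give receiver $D_2$ the side information $W_1$ (equivalently $X_1^N$). Since $D_2$ can decode $W_2$ on its own and now also knows $W_1$, it can reconstruct the interference and we obtain $NR_2 \le I(X_2^N;Y_2^N\mid X_1^N,\mathbf{H}^N) + N\epsilon_N$, which single-letterizes (using the memoryless structure per fading state and concavity/Jensen as in the MAC converse of \cite{cap_theorems:TH01}) to $R_2 \le \mathbb{E}[C(|H_{2,2}|^2 P_2(\mathbf{H}))]$ for some $\underline{P}\in\mathcal{P}$. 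Symmetrically, $NR_1 \le I(X_1^N;Y_1^N\mid X_2^N,\mathbf{H}^N)+N\epsilon_N$ single-letterizes to $R_1 \le \mathbb{E}[C(|H_{1,1}|^2 P_1(\mathbf{H}))]$. The crucial point — and this is where the ``very strong'' hypothesis does the work — is that I must still argue $R_1$ is also bounded by what $D_2$ can decode from $Y_2$ \emph{before} being handed $W_1$; the condition $I(X_1;Y_1|X_2\mathbf{H}) < I(X_1;Y_2|\mathbf{H})$ for every input distribution guarantees that a receiver able to strip $X_2$ at $D_1$ could have decoded $W_1$ from $Y_2$ at least as reliably, so handing $W_1$ to $D_2$ costs nothing in the converse. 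Summing the two single-letter bounds and taking $N\to\infty$, $\epsilon_N\to 0$, yields $R_1+R_2 \le \mathbb{E}[C(|H_{1,1}|^2 P_1)] + \mathbb{E}[C(|H_{2,2}|^2 P_2)]$ over the feasible $\underline{P}$, and maximizing the right-hand side over $\mathcal{P}$ decouples into two independent single-link water-filling problems, giving exactly \eqref{VIFC_Con_SC}.

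The main obstacle I anticipate is making the genie/side-information step fully rigorous in the ergodic-fading setting: one has to verify that the very strong condition \eqref{VSIFC_Cond}, stated as an inequality of conditional mutual informations for \emph{all} Gaussian inputs and valid \emph{per fading realization}, is strong enough to control the induced (possibly non-Gaussian, power-varying) input distributions that an arbitrary code produces, and that the averaging over the stationary ergodic process $\underline{\mathbf{H}}$ commutes with the single-letterization. I would handle this by the limiting/continuity arguments of \cite[Appendix B]{cap_theorems:TH01} — discretize the fading support, apply the finite-state parallel-channel converse, then pass to the limit — and by noting that the relevant mutual-information gap in \eqref{VSIFC_Cond} is uniform enough (after taking expectations) that the genie inequality survives for every admissible input. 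A secondary, more routine obstacle is confirming that the joint maximization of $\mathbb{E}[C(|H_{1,1}|^2P_1)]+\mathbb{E}[C(|H_{2,2}|^2P_2)]$ subject to the two separate constraints in \eqref{GIC_pwr_constraint} indeed separates; this is immediate since each term depends only on its own $P_k$ and each $P_k$ has its own budget, so the KKT stationarity conditions reduce to the standard single-user water-filling equations, matching the $f_k^{(1)}$ expressions in \eqref{fk_defn}.
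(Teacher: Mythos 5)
Your overall architecture is the same as the paper's: an interference-free outer bound on $R_1+R_2$ matched by the Case-1 C-MAC achievable rate of Theorem \ref{Th_1} (via Corollary \ref{Cor_1}), and your genie-aided Fano argument is simply the rigorous form of the paper's one-line outer bound obtained by setting $H_{j,k}=0$ for all $j\neq k$. The one substantive problem is that you have the role of the very strong hypothesis backwards. The genie bounds $R_k\leq \mathbb{E}\left[C\left(\left\vert H_{k,k}\right\vert^{2}P_{k}\right)\right]$ hold for \emph{every} ergodic IFC with no hypothesis whatsoever: handing $D_2$ the side information $W_1$ can only enlarge the mutual information appearing in an outer bound, so there is no ``cost'' to justify, and the ``main obstacle'' you anticipate --- that (\ref{VSIFC_Cond}) must be shown to control the non-Gaussian, power-varying input distributions induced by an arbitrary code --- is a non-issue because the condition is never invoked in the converse. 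Where (\ref{VSIFC_Cond}) actually does its work is in the achievability half: it is exactly the Case-1 condition (\ref{Case1_Cond}), which guarantees that under the pair of independent single-link water-filling policies the intersection $\mathcal{C}_{1}\left(\underline{P}\right)\cap\mathcal{C}_{2}\left(\underline{P}\right)$ is a rectangle whose corner attains both interference-free capacities simultaneously, so that decoding both messages at both receivers meets the outer bound. (The reasoning you import --- that the interferer's message could have been decoded from $Y_2$ at least as reliably --- is Carleial-style achievability intuition, not a converse step.) With that correction, your proof coincides with the paper's.
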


\begin{proof}
An outer bound on the sum-capacity of the IFC can be obtained by setting
$H_{j,k}=0$ for all $j\not =k$, i.e., by assuming no interference. In the
absence of interference, Gaussian signaling achieves capacity for each of the
$S_{k}$ to $D_{k}$ links, $k=1,2$, and the resulting sum-capacity is given by
(\ref{VIFC_Con_SC}) where, \underline{$P$}$_{k}^{\ast}\left(  \mathbf{H}%
\right)  $ is the optimal water-filling solution for single-sender
single-receiver ergodic fading links, i.e., it satisfies the condition in
(\ref{fk_bound}) for $f_{k}^{(1)}$ in (\ref{fk_defn}), subject to
(\ref{GIC_pwr_constraint}). From Theorem \ref{Th_1}, we see that when the
channel statistics and the power policy satisfy (\ref{Case1_Cond}), i.e.,
$\underline{P}^{\ast}=\underline{P}^{(1)}\in\mathcal{B}_{1}$, the achievable
strategy of decoding both messages at both destinations achieves this
sum-capacity outer bound. Thus, the sum-capacity of a very strong IFC is that
of a C-MAC for which $\underline{P}$ satisfies case $1$ conditions, i.e.,
$\underline{P}$ satisfies (\ref{VSIFC_Cond}).
\end{proof}

For a deterministic $\mathbf{H}$, the conditions in (\ref{VSIFC_Cond})
simplify to those for the very strong non-fading IFC in
\cite{cap_theorems:Sato_IC}. Further, from Fig. \ref{Fig_Case12}, we see that
as with the non-fading very strong IFC, the intersecting region for Case $1$
is also a rectangle; note, however that unlike the non-fading case, this
rectangle is not the entire capacity region but only the region achieving the
sum-capacity. Finally, note that the condition in (\ref{VSIFC_Cond}) depends
on both the channel statistics and the transmit power.

\begin{remark}
In contrast, the conditions for case $2$ in (\ref{Case2_Cond}) model a
\textit{weak ergodic }IFC for which the C-MAC sum-capacity is strictly a lower bound.
\end{remark}

\subsection{Strong Ergodic IFC}

\begin{definition}
A strong ergodic fading IFC with respect to the tuple $\left(  \mu_{1},\mu
_{2}\right)  $ results when a $\underline{P}\in\mathcal{P}$ and $\mathbf{H}$
satisfies%
\begin{align}
I(X_{1};Y_{1}|X_{2}\mathbf{H})  &  <I(X_{1};Y_{2}|X_{2}\mathbf{H}%
)\label{SIC_Cond1}\\
I(X_{2};Y_{2}|X_{1}\mathbf{H})  &  <I(X_{2};Y_{1}|X_{1}\mathbf{H})
\label{SIC_Cond2}%
\end{align}
for all choices of $X_{1}$ and $X_{2}$.
\end{definition}

\begin{theorem}
\label{Th_S}The sum-capacity of the class of strong ergodic fading Gaussian
IFCs is
\begin{equation}
\min_{j=1,2}\left\{  \mathbb{E}\left[  C\left(  \sum\nolimits_{k=1}%
^{2}\left\vert H_{j,k}\right\vert ^{2}P_{k}^{\ast}\right)  \right]  \right\}
\end{equation}
where, for all $k$, $P_{k}^{\ast}=P_{k}^{\left(  i\right)  }$ or $P_{k}^{\ast
}=P_{k}^{\left(  l,n\right)  }$ for $l=1$ and $i,n\in\left\{
3a,3b,3c\right\}  $.
\end{theorem}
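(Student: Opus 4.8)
The plan is to establish matching achievability and converse bounds for the sum-capacity of the strong ergodic fading IFC. Achievability is immediate from the preceding development: under the strong conditions \eqref{SIC_Cond1}--\eqref{SIC_Cond2}, one checks that both receivers can decode both messages (each cross-link is at least as strong as the corresponding direct link conditioned on the other input), so the C-MAC coding strategy applies and, by Corollary \ref{Cor_1}, any rate pair in $\mathcal{C}_{\text{C-MAC}}$ is achievable for the IFC. In particular the C-MAC sum-capacity $\min_{j=1,2}\mathbb{E}[C(\sum_{k}|H_{j,k}|^{2}P_{k}^{\ast})]$, with $\underline{P}^{\ast}$ the optimizer supplied by Theorem \ref{Th_1}, is achievable; the strong conditions force this optimizer into one of the active or $l=1$ boundary cases $\{3a,3b,3c,(1,3a),(1,3b),(1,3c)\}$, which is why only those indices appear in the statement. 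So the real content is the converse: no scheme for the IFC can beat the value $\min_{j}\mathbb{E}[C(\sum_{k}|H_{j,k}|^{2}P_{k}^{\ast})]$.

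For the converse I would argue as in the classical strong-interference result of \cite{cap_theorems:Sato_IC}, adapted to the ergodic fading / resource-allocation setting following \cite{cap_theorems:TH01}. Fix any code achieving $(R_{1},R_{2})$ with vanishing error probability under a power policy $\underline{P}\in\mathcal{P}$. Receiver $D_{1}$ decodes $W_{1}$; I would show that under \eqref{SIC_Cond1} receiver $D_{2}$ can \emph{also} decode $W_{1}$ (after first decoding its own $W_{2}$), because the strong-interference inequality, when carried through Fano's inequality and the usual single-letterization with the fading state $\mathbf{H}$ available to all nodes, guarantees $I(W_{1};Y_{1}^{N}|W_{2},\mathbf{H}^{N})\le I(W_{1};Y_{2}^{N}|W_{2},\mathbf{H}^{N})$ term by term. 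Hence both receivers effectively decode both messages, and the rate pair must lie in the C-MAC region at \emph{both} $D_{1}$ and $D_{2}$: $R_{1}+R_{2}\le \mathbb{E}[C(\sum_{k}|H_{j,k}|^{2}P_{k})]$ for $j=1,2$, so $R_{1}+R_{2}\le\min_{j}\mathbb{E}[C(\sum_{k}|H_{j,k}|^{2}P_{k})]$. Maximizing the right-hand side over $\underline{P}\in\mathcal{P}$ and invoking the optimality of $\underline{P}^{\ast}$ from Theorem \ref{Th_1} gives exactly the claimed expression, matching the achievable bound.

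The step I expect to be the main obstacle is making the "$D_{2}$ can decode $W_{1}$" argument fully rigorous in the ergodic setting. In the non-fading Gaussian case this is a clean genie/degradedness argument, but here the per-symbol powers $P_{k,i}=P_{k}(\mathbf{H}_{i})$ vary with the channel state, and the strong conditions \eqref{SIC_Cond1}--\eqref{SIC_Cond2} are stated as inequalities "for all choices of $X_{1},X_{2}$" rather than as a fixed channel ordering; one must verify that the single-letter mutual-information inequality propagates to the $N$-letter quantities despite this state-dependence, which is where the limiting arguments of \cite[Appendix B]{cap_theorems:TH01} are needed. A secondary subtlety is confirming that the strong conditions genuinely preclude the inactive cases $1$ and $2$ and the $l=2$ boundary cases, so that the optimizer $\underline{P}^{\ast}$ really does come from the listed index set and the $\min_{j}$ is attained consistently; I would handle this by comparing \eqref{SIC_Cond1}--\eqref{SIC_Cond2} against the case conditions \eqref{Case1_Cond}--\eqref{CaseBC_Cond} and noting that strong interference rules out both sum-rate constraints being simultaneously inactive.
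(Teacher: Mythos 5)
Your proposal is correct and takes essentially the same route as the paper: achievability via the C-MAC region and Corollary \ref{Cor_1}, and a Sato/Costa--El Gamal style converse that uses the strong conditions (\ref{SIC_Cond1})--(\ref{SIC_Cond2}) together with Fano's inequality and Gaussian optimality to bound $R_{1}+R_{2}\leq I(X_{1}X_{2};Y_{j}|\mathbf{H})$ for each $j=1,2$ (your ``$D_{2}$ decodes $W_{1}$ after $W_{2}$'' step is exactly the paper's chain $I(X_{1};Y_{1}|X_{2}\mathbf{H})\leq I(X_{1};Y_{2}|X_{2}\mathbf{H})$ followed by merging with $I(X_{2};Y_{2}|\mathbf{H})$), followed by the case analysis of Theorem \ref{Th_1}. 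One minor correction: the strong conditions do \emph{not} rule out the inactive case $1$ (very strong implies strong); the paper excludes $i=1$ from the listed index set by deferring that subcase to Theorem \ref{Th_VS}, not because ``both sum-rate constraints being simultaneously inactive'' is impossible under strong interference.
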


\begin{proof}
Due to lack of space, we present a proof sketch. We use the fact that the
channel states are independent of the source messages, Fano's and the data
processing inequality, the ergodicity of the channel for large $N$, the fact
that $\underline{P}\in\mathcal{P}$ satisfies (\ref{SIC_Cond1}) and
(\ref{SIC_Cond2}), and the optimality of Gaussian signaling to upper bound the
sum-rate as
\begin{align}
R_{1}+R_{2}  &  \leq\left[  I(X_{1};Y_{1}|X_{2}\mathbf{H})+I(X_{2}%
;Y_{2}|\mathbf{H})\right] \label{SIFC_RsumB1}\\
&  \leq\left[  I(X_{1};Y_{2}|X_{2}\mathbf{H})+I(X_{2};Y_{2}|\mathbf{H})\right]
\label{SIFC_C1}\\
&  \leq I(X_{1}X_{2};Y_{2}|\mathbf{H})\label{SIFC_C2}\\
&  \leq\mathbb{E}\left[  C\left(  \sum\nolimits_{k=1}^{2}\left\vert
H_{2,k}\right\vert ^{2}P_{k}^{\ast}\right)  \right]  \label{SIFC_C4}%
\end{align}
One can similarly show that
\begin{equation}
R_{1}+R_{2}\leq\mathbb{E}\left[  C\left(  \sum\nolimits_{k=1}^{2}\left\vert
H_{1,k}\right\vert ^{2}P_{k}^{\ast}\right)  \right]  ,
\end{equation}
and thus, from (\ref{CMAC_Cj}) we see that the sum-rate is upper bounded by
the sum-capacity of a C-MAC. Further, we can bound $R_{1}\leq\mathbb{E[}%
C(\left\vert H_{1,k}\right\vert ^{2}P_{k}^{\ast})]$ and $R_{2}\leq$
$\mathbb{E[}C(\left\vert H_{2,k}\right\vert ^{2}P_{k}^{\ast})]$, and thus,
from Corollary \ref{Cor_1}, the sum-capacity of a the ergodic C-MAC
sum-capacity is also the sum-capacity of the ergodic IFC when (\ref{SIC_Cond1}%
) and (\ref{SIC_Cond2}) hold.

\textit{Optimal Power Allocation }: From Theorem \ref{Th_1}, the optimal
$\underline{P}^{\ast}$ for an ergodic C-MAC satisfies only one of the
conditions in (\ref{Case1_Cond})-(\ref{CaseBC_Cond}). Further, from Theorem
\ref{Th_1} and Figs. \ref{Fig_Case12} and \ref{Fig_Case3abc}, the conditions
in (\ref{SIC_Cond1}) and (\ref{SIC_Cond2}) can be satisfied by $7$ different
cases, namely, cases $1$, $3a$, $3b$, $3c$, $\left(  1,3a\right)  $, $\left(
1,3b\right)  $, and $\left(  1,3c\right)  $. Since these cases are mutually
exclusive, the optimal $\underline{P}^{\ast}$ is given by that optimal policy
which in addition to satisfying the condition for one of the above listed
cases also satisfies (\ref{SIC_Cond1}) and (\ref{SIC_Cond2}). For example,
suppose $\underline{P}^{\ast}$ satisfies the condition for case $1$, i.e.,
$\underline{P}^{\ast}\in\mathcal{B}_{1}$. Since the conditions for this very
strong case in (\ref{Case1_Cond}) (see also (\ref{VSIFC_Cond})) imply the
conditions for the strong case in (\ref{SIC_Cond1}) and (\ref{SIC_Cond2}), the
sum-capacity and the optimal power policy are directly given by Theorem
\ref{Th_VS}. On the other hand, suppose $\underline{P}^{\ast}\in
\mathcal{B}_{3a}$, i.e., $\underline{P}^{\ast}$ satisfies the conditions in
Theorem \ref{Th_1} only for case $3a$ (see Fig. \ref{Fig_Case3abc}). Since
$\underline{P}^{\ast}$ also satisfies (\ref{SIC_Cond1}) and (\ref{SIC_Cond2}),
we define an open (or half-open) set $\mathcal{B}_{3a}^{\prime}\subset
\mathcal{B}_{3a}$ in which (\ref{SIC_Cond1}), (\ref{SIC_Cond2}), and
(\ref{Case3a_Cond}) are all satisfied. The concavity of the sum-rate
expression for this case then guarantees that the optimal policy is unique and
belongs to the open set $\mathcal{B}_{3a}^{\prime}$ (see also the arguments in
Section \ref{section 3}). Note that the requirement that $\underline{P}%
^{(3a)}$ satisfy (\ref{SIC_Cond1}) and (\ref{SIC_Cond2}) only limits
$\underline{P}^{(3a)}$ to $\mathcal{B}_{3a}^{\prime}$ and does not change the
solution presented in Theorem \ref{Th_1} for this case. Thus, the sum-capacity
for this case is
\begin{equation}
\mathbb{E}\log\left(  1+\sum_{k=1}^{2}\left\vert H_{1,k}\right\vert ^{2}%
P_{k}^{(3a)}\left(  \mathbf{H}\right)  \right)  .
\end{equation}
The arguments above also apply to the remaining cases listed above.
\end{proof}

\begin{remark}
The conditions in (\ref{SIC_Cond1}) and (\ref{SIC_Cond2}) are ergodic
generalizations of the conditions presented in \cite{cap_theorems:Sato_IC} for
the non-fading strong IFC (see also \cite[(1),(2)]%
{cap_theorems:CostaElGamal_IC} for the discrete memoryless strong IFC).
However, unlike the non-fading Gaussian IFC, the conditions in
(\ref{SIC_Cond1}) and (\ref{SIC_Cond2}) for the ergodic Gaussian\ IFC depend
on both the channel statistics and the power policy $\underline{P}$. Further,
as expected, the very strong ergodic IFC is a special case of the strong IFC
where (\ref{VSIFC_Cond}) holds.
\end{remark}

\bibliographystyle{IEEEtran}
\bibliography{IC_refs}

\bigskip
\end{document}